\documentclass[fleqn,12pt,twoside]{article}

\usepackage{graphicx}

\input{layout_commands}
%%%%%%%%%%%%%%%%%%%%%%%%%%%%%%%%%%%%%%%%%%%%%%%%%%%%%%%%%%%%%%%%%%%%%%%%%%
%
% Ver 1.07 - 14-02-01 %
% Ver 1.06 - 19-10-00 %
%
%

\usepackage{amsmath}
\usepackage{amssymb}

%%%%%%%%%%%%%%%%%%%%%%%
% environment rewriting for enumeration
%%%%%%%%%%%%%%%%%%%%%%%
\newcounter{mynumcounter}
	       {%
		 \begin{list}{(\roman{mynumcounter})\hspace*{\fill}}%
		   {
		     \setlength{\topsep}{0cm}
		     \setlength{\partopsep}{0cm}
		     \setlength{\itemsep}{0ex}
		     \setlength{\parsep}{0cm}
		     \setlength{\leftmargin}{0cm}
		     \setlength{\itemindent}{8mm}		   
		     \setlength{\labelsep}{3mm}
		     \setlength{\labelwidth}{5mm}
		     \usecounter{mynumcounter}
		   }%
	       }
	       {\end{list}}

%%%%%%%%%%%%%%%%%%%%%%%%%%%%%%%%%%%%%%%%%%%%%%%%%%%%%%%%%%
% SHORTHANDS
%%%%%%%%%%%%%%%%%%%%%%%%%%%%%%%%%%%%%%%%%%%%%%%%%%%%%%%%%%

\newcommand{\eps}{\varepsilon}
\newcommand{\del}{\partial}

\newcommand{\dd}[2]{\frac{\del #1}{\del #2}}
\newcommand{\ddeval}[3]{\left.\dd{#1}{#2}\right|_{#3}}

\newcommand{\tr}{\operatorname{tr}}
\renewcommand{\div}{\operatorname{div}}

\newcommand{\IR}{\mathbf{R}}

\newcommand{\CH}{\mathcal{H}}

\newcommand{\CJ}{\mathcal{J}}

\newcommand{\CO}{\mathcal{O}}
\newcommand{\CP}{\mathcal{P}}

\newcommand{\id}{\operatorname{id}}

\newcommand{\rmd}{\mathrm{d}}

\newcommand{\dist}{\mathrm{dist}}

\newcommand{\dmu}{\,\rmd\mu}

\newcommand{\ra}{\rangle}
\newcommand{\la}{\langle}

\newcommand{\Connection}[1]{\smash{\sideset{^{#1}}{}{\mathop\nabla\nolimits}}}
\newcommand{\Nabla}[1]{\Connection{#1}}

\newcommand{\Divergence}[1]{\smash{\sideset{^{#1}}{}{\mathop\mathrm{div}\nolimits}}}

\newcommand{\divM}{\!\Divergence{M}}

\newcommand{\Scalarcurv}[1]{\smash{\sideset{^{#1}}{}{\mathop\mathrm{Sc}\nolimits}}}

\newcommand{\ScalM}{\!\Scalarcurv{M}}
\newcommand{\ScalSig}{\!\Scalarcurv{\Sigma}}

\newcommand{\Acircbar}%
{\hspace*{4pt}\raisebox{8.5pt}{\makebox[-4pt][l]{$\scriptstyle\circ$}}\bar A}
\newcommand{\Kcircbar}%
{\hspace*{4pt}\raisebox{8.5pt}{\makebox[-4pt][l]{$\scriptstyle\circ$}}\bar K}

\newcommand{\half}{\tfrac{1}{2}}

%%% Local Variables: 
%%% mode: latex
%%% TeX-master: "curv_est_omts"
%%% End: 

\input{draft_commands}

\newcommand{\graph}{\operatorname{graph}}

\setcounter{secnumdepth}{2}
\title{\titlefamily
Blowup of Jang's equation at outermost marginally trapped surfaces
}
\author{
  \authname{Jan Metzger
  \thanks{Research on this project started while the author was supported in part by a Feodor-Lynen Fellowship of the Humboldt Foundation.}}\\
  \authaddress{jan.metzger@aei.mpg.de\\[.2ex]
    Albert-Einstein-Institut,
    Am M\"uhlenberg 1,
    D-14476 Potsdam,
    Germany.}
}
\date{}

\begin{document}
\hyphenation{}
\pagestyle{footnumber}
\maketitle
\thispagestyle{footnumber}
\begin{abst}%
  The aim of this paper is to accurately describe the blowup of
  Jang's equation. First, we discuss how to construct solutions that
  blow up at an outermost MOTS. Second, we exclude the possibility
  that there are extra blowup surfaces in data sets with non-positive
  mean curvature. Then we investigate the rate of convergence of the
  blowup to a cylinder near a strictly stable MOTS and show
  exponential convergence with an identifiable rate near a strictly
  stable MOTS.
\end{abst}
\section{Introduction}
\label{sec:introduction}
This paper is concerned with the examination of the relation of Jang's
equation to marginally outer trapped surfaces (MOTS). To set the perspective,
we consider Cauchy data $(M,g,K)$ for the Einstein equations. Such
data sets are 3-manifolds $M$ equipped with a Riemannian metric
together with a symmetric bilinear form $K$ representing the second
fundamental form of the time slice $M$ in space-time. A marginally
outer trapped surface is a surface with $\theta^+ = H + P = 0$, where
$H$ is the mean curvature of $\Sigma$ in $M$ and $P = \tr K -
K(\nu,\nu)$ for the normal $\nu$ to $\Sigma$.

In the paper \cite{Andersson-Metzger:2007}, inspired by an idea of
Schoen \cite{Schoen:2004}, we constructed MOTS in the presence of
barrier surfaces by inducing a blow-up of Jang's equation. In this
context, Jang's equation~\cite{Schoen-Yau:1981,Jang:1978} is an
equation of prescribed mean curvature for the graph of a function in
$M\times\IR$. For details we refer to
section~\ref{sec:preliminaries}. 

In this note, we take a slightly different perspective. Consider a
data set $(M,g,K)$ with a non-empty outer boundary $\del^+M$ and
assume that we are given the outermost MOTS $\Sigma$ in
$(M,g,K)$. Here, \emph{outermost} means that there is no other MOTS on
the outside of $\Sigma$. From \cite{Andersson-Metzger:2007} it follows
that $(M,g,K)$ always contains a unique such surface, or does not
contain outer trapped surfaces at all, under the assumption that $\del
M$ is outer untrapped. As stated in theorem~\ref{thm:blowup}, we show
that there exists a solution $f$ to Jang's equation that actually
blows up at $\Sigma$, assuming that $\del M$ is inner and outer
untrapped. By blow-up we mean, that outside from $\Sigma$ the function
$f$ is such that $\graph f$ is a smooth submanifold of $M\times\IR$
with a cylindrical end converging to $\Sigma\times\IR$. There is
however a catch, as $f$ may blow up at other surfaces, too.  These
surfaces are marginally inner trapped. In theorem~\ref{thm:nonpos_mc}
we show that the other blow-up surfaces can not occur if the data set
has non-positive mean curvature. 

To put the result in perspective note that if the dominant energy
condition holds, the graph of $f$ is of non-negative Yamabe type and
thus can be equipped with a (singular) metric of zero scalar
curvature. This was used by Schoen and Yau in \cite{Schoen-Yau:1981}
to prove the positive mass theorem. Later Bray \cite{Bray:2001}
proposed to use Jang's equation to relate the Penrose conjecture in
its general setting to the Riemannian Penrose inequality
\cite{Huisken-Ilmanen:2001,Bray:2001} on a manifold related to Jang's
graph. One of the main questions in this program is whether or not
Jang's equation can be made to blow up at a specific MOTS. This
question was raised in the literature, cf. for example
\cite{Malec-OMurchadha:2004} where this question is discussed in the
rotationally symmetric case. Here we give the positive answer that
blow-up solutions exist at outermost MOTS. The author recently learned
that the existence of the blow-up solution is used in
\cite{khuri:2009} to prove a Penrose-like inequality.

With the blowup constructed, we can turn to the asymptotic behavior
of the blowup itself. It has been shown in~\cite{Schoen-Yau:1981} that
such a blowup must be asymptotic to a cylinder over the outermost
MOTS. In theorems~\ref{thm:exponential} and~\ref{thm:rigidity} we show
that under the assumption of strict stability the convergence rate is
exponential with a power directly related to the principal eigenvalue
of the MOTS. The general idea is to show the existence of a super-solution
with at most logarithmic blowup of the desired rate. Turning the
picture sideways yields exponential decay, when writing the solution
as a graph over the cylinder in question. Furthermore, we show that
beyond a certain decay rate, the solution must be trivial, thus
exhibiting the actual rate.

We expect that the knowledge of these asymptotics is tied to the
question whether the blow-up solution is unique. Furthermore note that
the constant in the Penrose-like inequality in~\cite{khuri:2009}
depends on the geometry of the solution. We thus expect that the value
on this constant is related to the asymptotic behavior near the
blow-up cylinder.

Before turning to these results, we introduce some notation in
section~\ref{sec:preliminaries}. Section~\ref{sec:blowup} proceeds
with the construction of the the blow-up. We will not go into details
here, but emphasize the general idea and point to the results needed
from the paper \cite{Andersson-Metzger:2007}. In
section~\ref{sec:asymptotic-behavior}, we perform the calculation of
the asymptotics.

%%% Local Variables: 
%%% mode: latex
%%% TeX-master: "master"
%%% ispell-local-dictionary: "en_US"
%%% End: 

%
\section{Preliminaries}
\label{sec:preliminaries}
Let $(M,g,K)$ be an initial data set for the Einstein equations. That
is $M$ is a 3-dimensional manifold, $g$ a Riemannian metric on $M$ and
$K$ a symmetric 2-tensor. We do not require any energy condition to hold.

Assume that $\del M$ is the disjoint union $\del M = \del^- M \cup
\del^+ M$, where $\del^\pm M$ are smooth surfaces without boundary. We
refer to $\del^- M$ as the inner boundary and endow it with the normal
vector $\nu$ pointing into $M$. The outer boundary $\del^+ M$ is
endowed with the normal $\nu$ pointing out of $M$. We denote by
$H[\del M]$ the mean curvature of $\del M$ with respect to the normal
vector field $\nu$, and by $P[\del M]=\tr_{\del M} K$ the trace of the
tensor $K$ restricted to the 2-dimensional surface $\del M$. Then the
inward and outward expansions of $\del M$ are defined by
\begin{equation*}
  \theta^\pm[\del M] = P[\del M] \pm H[\del M].
\end{equation*}
Assume that $\theta^+[\del^- M] = 0$, and that $\theta^+[\del^+ M]>0$
and $\theta^-[\del^+M]<0$.

If $\Sigma\subset M$ is a smooth, embedded surface homologous to
$\del^+ M$, then $\Sigma$ bounds a region $\Omega$ together with
$\del^+ M$. In this case, we define $\theta^\pm[\Sigma]$ as above,
where $H$ is computed with respect to the normal vector field pointing
into $\Omega$ (that is in direction of $\del^+M$). $\Sigma$ is called
marginally outer trapped surface (MOTS), if $\theta^+[\Sigma]=0$.  We
say that $\del M$ is an outermost MOTS, if there is no other MOTS in
$M$, which is homologous to $\del^+ M$. In
\cite{Andersson-Metzger:2007} it is proved that for any initial data
set $(M,g,K)$ which contains a MOTS, there is also an outermost MOTS
surrounding it.

Let $\Sigma\subset M$ be a MOTS and consider a normal variation of
$\Sigma$ in $M$, that is a map $F: \Sigma \times (-\eps,\eps) \to M$
such that $F(\cdot,0)=\id_\Sigma$ and $\ddeval{}{s}{s=0} F(p,s) =
f\nu$, where $f$ is a function on $\Sigma$ and $\nu$ is the normal of
$\Sigma$. Then the change of $\theta^+$ is given by
\begin{equation*}
  \ddeval{\theta^+[F(\Sigma,s)]}{s}{s=0} = L_M f,
\end{equation*}
where $L_M$ is a quasi-linear elliptic operator of second order along
$\Sigma$. It is given by
\begin{equation*}
  L_M f =  -\Delta f + 2 S(\nabla f) + f\big(\div S -|\chi^+|^2 - |S|^2
  + \tfrac{1}{2} \ScalSig -\mu - J(\nu)\big).
\end{equation*}
In this expression $\nabla$, $\div$ and $\Delta$ denote the gradient,
divergence and Laplace-Beltrami operator tangential to $\Sigma$. The
tangential 1-form $S$ is given by $S = K(\cdot,\nu)^T$, $\chi^+$ is
the bilinear form $\chi^+ = A + K^\Sigma$, where $A$ is the second
fundamental form of $\Sigma$ in $M$ and $K^\Sigma$ is the projection
of $K$ to $T\Sigma\times T\Sigma$. Furthermore, $\ScalSig$ denotes the
scalar curvature of $\Sigma$, $\mu = \half(\ScalM -|K|^2 + (\tr
K)^2)$, and $J = \divM K - d \tr K$. For a more detailed investigation
of this operator we refer to \cite{Andersson-Mars-Simon:2005} and
\cite{andersson-mars-simon:2007}.

The facts we will need here is that $L_M$ has a principal eigenvalue
$\lambda$, which is real and has a one-dimensional eigenspace which is
spanned by a positive function. If $\lambda$ is non-negative $\Sigma$
is called stable, and if $\lambda$ is positive, $\Sigma$ is called
strictly stable.  In particular, if $\Sigma$ is strictly stable as a
MOTS, there exists an outward deformation strictly increasing
$\theta^+$.

In $\bar M = M \times \IR$, we consider Jang's equation
\cite{Jang:1978,Schoen-Yau:1981} for the graph of a function $f: M\to 
\IR$. Let $N := \graph f = \{(x,z) : z=f(x)\}$. The mean curvature
$\CH[f]$ of $N$ with respect to the downward normal is given by
\begin{equation*}
  \CH[f] = \div \left( \frac{\nabla f}{\sqrt{1 + |\nabla f|^2}} \right)
\end{equation*}
define $\bar K$ on $\bar M$ by $\bar K_{(x,z)} (X,Y) = K_x(\pi X,\pi
Y)$, where $\pi: T\bar M \to TM$ denotes the orthogonal projection
onto the horizontal tangent vectors. Let
\begin{equation*}
  \CP[f] = \tr_N \bar K.  
\end{equation*}
Then Jang's equation becomes
\begin{equation}
  \label{eq:1}
  \CJ[f] = \CH[f] - \CP[f] = 0.
\end{equation}

%%% Local Variables: 
%%% mode: latex
%%% TeX-master: "master"
%%% ispell-local-dictionary: "en_US"
%%% End: 

%
\section{The blowup}
\label{sec:blowup}
The main result of this section is that we can construct a solution to
Jang's equation which blows up at the outermost MOTS in $(M,g,K)$ and
has zero Dirichlet boundary data at $\del^+ M$. In fact, we chose the
assumptions on the outer boundary $\del^+ M$ so that we can prescribe
more general Dirichlet data there. The focus here lies on the blow-up in
the interior, so that we do not investigate the optimal conditions for
$\del^+M$.
\begin{figure}
  \centering
  \resizebox{.5\linewidth}{!}{\input{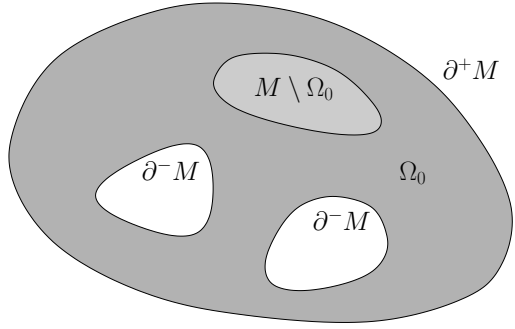}}
  \caption{The situation in Theorem~\ref{thm:blowup}. All of the
    shaded region belongs to $M$, whereas $f$ is only defined in $\Omega_0$.}
\end{figure}
\begin{theorem}
  \label{thm:blowup}
  If $(M,g,K)$ be an initial data set with $\del M = \del^- M \cup
  \del^+ M$ such that $\del^-M$ is an outermost MOTS,
  $\theta^+[\del^+M]>0$ and $\theta^-[\del^+M]<0$, then there exists
  an open set $\Omega_0\subset M$ and a function $f: \Omega_0\to \IR$
  such that
  \begin{enumerate}
  \item $M\setminus\Omega_0$ does not intersect $\del M$,
  \item $\theta^-[\del \Omega_0] =0$ with respect to the normal vector
    pointing into $\Omega_0$, 
  \item $\CJ[f] = 0$,
  \item $N^+ = \graph f \cap M\times \IR^+$ is asymptotic to the cylinder $\del^- M
    \times\IR^+$,
  \item $N^- = \graph f \cap M\times \IR^-$ is asymptotic to the cylinder $\del\Omega_0
    \times\IR^-$, and
  \item $f |_{\del^+ M} = 0$.
  \end{enumerate}
\end{theorem}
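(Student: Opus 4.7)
The plan is to follow the capillary-regularization scheme of Schoen--Yau, adapted as in \cite{Andersson-Metzger:2007}, and then to invoke the outermost hypothesis on $\del^-M$ to pin down the positive blow-up locus. For each regularization parameter $\tau>0$ I would solve the Dirichlet problem
\[
\CJ[f_\tau] = \tau f_\tau \quad\text{in } M, \qquad f_\tau\big|_{\del^+M}=0.
\]
A direct maximum-principle argument, together with boundedness of $|\bar K|$, yields $\tau\sup|f_\tau|\le C$ independently of $\tau$. The assumptions $\theta^+[\del^+M]>0$ and $\theta^-[\del^+M]<0$ furnish upper and lower barriers of the form $\pm a\log\dist(\cdot,\del^+M)+b$ in a collar of $\del^+M$, so that $|f_\tau|$ is controlled there uniformly in $\tau$. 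Standard quasilinear elliptic theory then produces smooth solutions $f_\tau$ for all small $\tau>0$ and already gives~(6).

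To force positive blow-up at $\del^-M$ I would use that $\del^-M$, being an outermost MOTS, is stable (the principal eigenvalue $\lambda$ of $L_M$ is non-negative), with strict stability available after a small inward perturbation if necessary. The positive principal eigenfunction of $L_M$ then supplies the ``vertical'' sub-barrier from \cite{Andersson-Metzger:2007}, which drives $\inf_{\del^-M} f_\tau\to+\infty$ as $\tau\to 0$. I would import that construction directly rather than redo it.

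With the uniform mean-curvature bound $|\CH[f_\tau]|\le|\CP[f_\tau]|+\tau|f_\tau|\le C$ in hand, the graphs $N_\tau=\graph f_\tau$ are pre-compact as integer rectifiable currents with bounded first variation. A subsequential limit $N_\infty$ decomposes into a graphical part $\graph f$ over an open set $\Omega_0\subset M$ together with vertical cylinders over surfaces $\Sigma^+$ with $\theta^+=0$ (where $f_\tau\to+\infty$) and surfaces $\Sigma^-$ with $\theta^-=0$ (where $f_\tau\to-\infty$); this decomposition is standard in the Schoen--Yau framework, and $f$ solves $\CJ[f]=0$ on $\Omega_0$, giving~(3). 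The outermost hypothesis on $\del^-M$ then forces $\Sigma^+=\del^-M$: any further positive-blow-up component would be a MOTS in the interior of $M$ strictly separating $\del^-M$ from $\del^+M$, contradicting outermostness. Combined with the $\del^+M$-barriers, which preclude blow-up near $\del^+M$, this gives~(1) and identifies the interior part of $\del\Omega_0$ with $\Sigma^-$, so that $\theta^-[\del\Omega_0]=0$, proving~(2).

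The asymptotic cylindrical structure in~(4) and~(5) follows by writing $N_\infty$, far out along each end, as a small graph over $\del^-M\times\IR^+$ respectively $\del\Omega_0\times\IR^-$, and appealing to the smoothness of the limit current together with standard regularity for the resulting prescribed mean-curvature equation. The hard part will be the barrier construction at $\del^-M$: one must show that the principal eigenfunction of $L_M$ lifts to a local sub-barrier which both drives $f_\tau\to+\infty$ at $\del^-M$ itself and prevents additional positive blow-up at any MOTS lying strictly outside $\del^-M$. This is where the outermost hypothesis enters in an essential way and is the main technical input borrowed from~\cite{Andersson-Metzger:2007}.
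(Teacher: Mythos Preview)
Your outline follows the Schoen--Yau / Andersson--Metzger scheme in broad strokes, and the use of outermostness to pin down the positive blow-up locus is correct in spirit. But there is a genuine gap at $\del^- M$, and it is exactly the point where the paper does something you have skipped.

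You set up the regularized Dirichlet problem on $M$ with $f_\tau|_{\del^+M}=0$ but never say what boundary condition you impose at $\del^-M$. You then appeal to a ``vertical sub-barrier from \cite{Andersson-Metzger:2007}'' built out of the principal eigenfunction to drive $f_\tau\to+\infty$ there. The trouble is that the barrier machinery in \cite{Andersson-Metzger:2007} requires the inner boundary to satisfy $\theta^+<0$ \emph{strictly} (and, for the gradient estimate, also $H>0$). Your inner boundary $\del^-M$ has $\theta^+=0$, so neither the Dirichlet gradient barrier nor a sub-barrier of the type you describe is available directly on $M$. Your remark that ``strict stability is available after a small inward perturbation'' does not help: pushing $\del^-M$ \emph{into} $M$ along the eigenfunction gives, by stability, $\theta^+\ge 0$ on the perturbed surface, which is the wrong sign for a lower barrier; and if $\lambda=0$ there is no reason the perturbed surface becomes strictly stable.

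The paper handles this by first embedding $(M,g,K)$ into a slightly larger data set $(\tilde M,\tilde g,\tilde K)$ so that $\del^-M$ sits in the \emph{interior} of $\tilde M$. One flows $\del^-M$ \emph{outward} (out of $M$) along the eigenfunction to obtain a foliation $\Sigma_s\subset \tilde M\setminus M$ with $\theta^+[\Sigma_s]<0$; when $\lambda=0$ the tensor $K$ is modified on the extension to force this inequality. A further modification makes $H[\del^-\tilde M]>0$. Only then does one pose the Dirichlet problem on $\tilde M$, with explicit large data $f_\tau=\delta/(2\tau)$ on $\del^-\tilde M$, for which the barriers of \cite{Andersson-Metzger:2007} apply. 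In the limit $\tau\to 0$ the region $\Omega_+$ contains a neighborhood of $\del^-\tilde M$; the foliation by surfaces with $\theta^+<0$ forces $\del\Omega_+$ to lie inside $M$, and then outermostness of $\del^-M$ gives $\overline{\Omega_+}=\tilde M\setminus M$, i.e.\ the positive blow-up occurs exactly at $\del^-M$. This extension-and-modification step is the essential missing ingredient in your proposal.
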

For data sets $(M,g,K)$ which do not contain surfaces with
$\theta^-=0$, the above theorem implies the following result.
\begin{corollary}
  \label{thm:blowup_no_mits}
  If $(M,g,K)$ is as in theorem~\ref{thm:blowup}, and in addition
  there are no subsets $\Omega\subset M$ with $\theta^-[\del\Omega]
  =0$ with respect to the normal pointing out of $\Omega$, then there
  exists a function $f:M\to \IR$ such that
  \begin{enumerate}
  \item $\CJ[f] = 0$,
  \item $N = \graph f$ is asymptotic to the cylinder $\del^- M
    \times\IR^+$,
  \item $f |_{\del^+ M} = 0$.
  \end{enumerate}
\end{corollary}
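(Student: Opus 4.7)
The plan is to apply Theorem~\ref{thm:blowup} directly, and then use the extra hypothesis of the corollary to rule out the second, interior blow-up surface. Once it is shown that $\Omega_0 = M$, property (v) of Theorem~\ref{thm:blowup} becomes vacuous (there is no $N^-$ component), and properties (iii), (iv), and (vi) translate verbatim into the three conclusions of the corollary.

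The key step is to show $M\setminus\Omega_0 = \emptyset$, which I would do by contradiction. Assume $M\setminus\Omega_0$ is non-empty. Property (i) of Theorem~\ref{thm:blowup} says this complement is disjoint from $\del M$, so it lies in the interior of $M$. Pick a connected component $\Omega$ of the open set $M\setminus\overline{\Omega_0}$; its boundary $\del\Omega$ then lies in the interior blow-down part of $\del\Omega_0$. At every point of $\del\Omega$, the outward normal of $\Omega$ coincides with the normal of $\Omega_0$ pointing inward into $\Omega_0$. Property (ii) of Theorem~\ref{thm:blowup} therefore yields $\theta^-[\del\Omega] = 0$ with respect to the outward normal of $\Omega$, directly contradicting the hypothesis of the corollary that no such $\Omega$ exists. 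Hence $\Omega_0 = M$.

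The main subtlety is orientation bookkeeping and separating $\del\Omega_0$ into its $\del M$ components (where $\theta^\pm$ are nonzero by hypothesis, so property (ii) cannot literally refer to them) and the interior blow-down surface (where it does); both are handled immediately by property (i), since $\del\Omega$ must be disjoint from $\del M$. Other than this, the argument is a tautological matching of orientations, making the corollary essentially a one-line consequence of Theorem~\ref{thm:blowup}.
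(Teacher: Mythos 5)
Your argument is correct and is exactly the implication the paper intends: the paper gives no separate proof of the corollary, presenting it as an immediate consequence of Theorem~\ref{thm:blowup}, and your identification of a component of $M\setminus\overline{\Omega_0}$ as the forbidden set $\Omega$ (with the orientation matching between ``normal into $\Omega_0$'' and ``normal out of $\Omega$'') is precisely the one-line deduction being invoked. No gaps.
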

\begin{remark}
  Analogous results hold if $(M,g,K)$ is asymptotically flat with
  appropriate decay of $g$ and $K$ instead of having an outer boundary
  $\del^+ M$. Then the assertion $f |_{\del^+ M} = 0$ in
  theorem~\ref{thm:blowup} has to be replaced by $f(x)\to 0$ as
  $x\to\infty$.
\end{remark}
The proof of theorem~\ref{thm:blowup} is largely based on the tools
developed in \cite{Schoen-Yau:1981} and
\cite{Andersson-Metzger:2007}. Thus we will not include all details
here, but provide a summary, which facts will have to be used.
\begin{proof}
  We will assume that $(M,g,K)$ is embedded into $(M', g', K')$ which
  extends $M$ beyond the boundary $\del^- M$ such that $\del^- M$ lies
  in the interior of $M'$, without further requirements.

  Let $\del^- M = \cup_{i=1}^N \Sigma_i$ where the $\Sigma_i$ are the
  connected components of $\del M$. As $\del M$ is an outermost MOTS,
  each of the $\Sigma_i$ is stable \cite[Corollary
  5.3]{Andersson-Metzger:2007}.

  Following the proof of \cite[Theorem 5.1]{Andersson-Metzger:2007},
  we deform $\del^- M$ to a surface $\Sigma_s$ by pushing the
  components $\Sigma_i$ out of $M$, into the extension $M'$. To
  this end, let $\phi_i>0$ be the principal eigenfunction of the
  stability operator of $\Sigma_i$ and extend the vector field $X_i =
  -\phi_i\nu_i$ to a neighborhood of $\Sigma_i$ in $M'$. Flowing
  $\Sigma_i$ by $X_i$ yields a family of surfaces $\Sigma_i^s$,
  $s\in[0,\eps)$ so that the $\Sigma_i^s$ form a smooth foliation for
  small enough $\eps$ with $\Sigma_i^s \in M' \setminus M$. If
  $\Sigma_i$ is strictly stable then
  \begin{equation*}
    \ddeval{}{s}{s=0}\theta^+[\Sigma_s] = -\lambda\phi < 0,
  \end{equation*}
  where $\lambda$ is the principal eigenvalue of $\Sigma_i$. Thus, for
  small enough $\eps$, we have $\theta^+[\Sigma_i^s] < 0$ for all
  $s\in(0,\eps)$.

  If $\Sigma_i$ has principal eigenvalue $\lambda = 0$, then the 
  $\Sigma_i^s$ satisfy
  \begin{equation*}
    \ddeval{}{s}{s=0}\theta^+[\Sigma_s] = 0.
  \end{equation*}
  In this case it is possible to change the data $K'$ on $\Sigma_i^s$
  as follows
  \begin{equation}
    \label{eq:2}
    \tilde K = K' - \half \psi(s) \gamma_s.
  \end{equation}
  where $\gamma_s$ is the metric on $\Sigma_s$ and $\psi$ is a smooth
  function $\psi:[0,\eps]\to\IR$. The operator $\tilde \theta^+$,
  which means $\theta^+$ computed with respect to the data $\tilde
  K$ instead of $K'$, satisfies
  \begin{equation*}
    \tilde\theta^+[\Sigma_i^s] = \theta^+[\Sigma_i^s] - \psi(s).
  \end{equation*}
  It is clear from equation~(\ref{eq:2}) that $\psi$ can be chosen
  such that $\psi(0) = \psi'(0) = 0$ and
  $\tilde\theta^+[\Sigma_i^s]<0$ for all $s\in(0,\eps)$ provided
  $\eps$ is small enough. Then $\tilde K$ is $C^{1,1}$ when extended
  by $K$ to the rest of $M$.

  Replace each original boundary component $\Sigma_i$ of $M$ by a
  surface $\Sigma_i^\eps$ as constructed above, and replace $K'$ with
  $\tilde K$, such that the following properties are satisfied. Let
  $\tilde M$ denote the manifold with boundary components
  $\Sigma_i^\eps$ resulting from this procedure. Thus we
  construct from $(M,g,K)$ a data set $(\tilde M, g', \tilde K)$ with
  the following properties:
  \begin{enumerate}
  \item $M\subset \tilde M$ with $g'|_M = g$, $\tilde K|_M = K$, and
    $\del^+M = \del^+ \tilde M$,
  \item $\theta^+[\del^- \tilde M] <0$, and    
  \item the region $\tilde M \setminus M$ is foliated by surfaces
    $\Sigma_s$ with $\theta^+(\Sigma_s) <0$.
  \end{enumerate}

  The method developed in section 3.2 in \cite{Andersson-Metzger:2007}
  now allows the modification of the data $(\tilde M, g', \tilde K)$
  to a new data set, which we also denote by $(\tilde M, \tilde g,
  \tilde K)$, although $\tilde K$ changes in this step. This data set
  has the following properties
  \begin{enumerate}
  \item $M\subset \tilde M$ with $g'|_M = g$, $\tilde K|_M = K$, and
    $\del^+M = \del^+ \tilde M$,
  \item $\theta^+[\del^- \tilde M] <0$,
  \item $H[\del^-\tilde M] > 0$ where $H$ is the mean curvature of
    $\del^-M$ with respect to the normal pointing out of $\del^-
    \tilde M$,
  \item the region $\tilde M \setminus M$ is foliated by surfaces
    $\Sigma_s$ with $\theta^+(\Sigma_s) <0$.
  \end{enumerate}

  By section 3.3 in \cite{Andersson-Metzger:2007} this enables us to
  solve the boundary value problem
  \begin{equation}
    \label{eq:3}
    \begin{cases}
      \CJ[f_\tau] = \tau f_\tau & \text{in}\ \tilde M \\
      f_\tau      = \frac{\delta}{2\tau} & \text{on}\ \del^-\tilde M
      \\
      f_\tau      = 0 & \text{on} \del^+\tilde M
    \end{cases}    
  \end{equation}
  where $\delta$ is a lower bound for $H$ on $\del^-M$. The
  solvability of this equation follows, provided an estimate for the
  gradient at the boundary can be found. The barrier construction at
  $\del^-\tilde M$ was carried out in detail in
  \cite{Andersson-Metzger:2007}, whereas the barrier construction at
  $\del^+ \tilde M$ is standard due to the stronger requirement that
  $\theta^+[\del^+M] >0$ and $\theta^-[\del^+M] <0$.
  
  The solution $f_\tau$ to equation~(\ref{eq:3}) satisfies an estimate
  of the form
  \begin{equation}
    \label{eq:4}
    \sup_{\tilde M} |f_\tau| + \sup_{\tilde M} |\nabla f_\tau| \leq \frac{C}{\tau},
  \end{equation}
  where $C$ is a constant depending only on the data $(\tilde M,\tilde
  g, \tilde K)$ but not on $\tau$.

  The gradient estimate implies in particular that there exists an
  $\eps>0$ independent of $\tau$ such that
  \begin{equation*}
    f_\tau(x) \geq \tfrac{\delta}{4\tau} \quad \forall x\ \text{with}\
    \dist(x,\del^-\tilde M)
  \end{equation*}
  The graphs $N_\tau$ have uniformly bounded curvature in $\tilde M
  \times \IR$ away from the boundary. This allows the extraction of a
  sequence $\tau_i\to 0$ such that the $N_{\tau_i}$ converge to a
  manifold $N$, cf.~\cite[Proposition 3.8]{Andersson-Metzger:2007},
  \cite[Section 4]{Schoen-Yau:1981}. This convergence determines three
  subsets of $\tilde M$:
  \begin{equation*}
    \begin{split}
      \Omega_-
      &:=
      \{ x\in M : f_{\tau_i}(x) \to -\infty\ \text{as}\ i \to\infty\},
      \\[5pt]
      \Omega_0
      &:=
      \{ x\in M : \limsup_{i\to\infty} |f_{\tau_i}(x)| < \infty \},
      \\
      \Omega_+
      &:=
      \{ x\in M : f_{\tau_i}(x) \to \infty\ \text{as}\ i \to\infty\}.
    \end{split}
  \end{equation*}
  From the fact that the $f_\tau$ blow up near $\del^-\tilde M$, we
  have that $\Omega_+\neq\emptyset$ and $\Omega_+$ contains a
  neighborhood of $\del^-\tilde \Omega$. As already noted in
  \cite{Schoen-Yau:1981} $\del\Omega_+ \setminus \del\tilde M$
  consists of MOTS. As the region $\tilde M \setminus M$ is foliated
  by surfaces with $\theta^+<0$, we must have that $\Omega_+\supset
  (\tilde M \setminus M)$ and hence $\del\Omega_+$ is a MOTS in
  $M$. As $\del^- M$ was assumed to be an outermost MOTS in $M$, we
  conclude that the closure of $\Omega_+$ is $\tilde M \setminus M$.

  The barriers near $\del^+ M$ are so that they imply that the
  $f_\tau$ are uniformly bounded near $\del^+ M$. Thus $\Omega_0$
  contains a neighborhood of $\del^+ M$ and $\Omega_0\subset M$.

  The limit manifold $N$ over $\Omega_0$ is a graph satisfying
  $\CJ[f_\tau]=0$, and has the desired asymptotics.  
\end{proof}
We will now discuss a geometric condition to assert that the resulting
graph is non-singular on $M$, ie.\ $M=\Omega_0$ in
theorem~\ref{thm:blowup}.
\begin{theorem}
  \label{thm:nonpos_mc}
  Let $(M,g,K)$ be as in theorem~\ref{thm:blowup} with $\tr K \leq
  0$. Then in the assertion of theorem~\ref{thm:blowup} we have that
  $\Omega_0 = M$, that is $f$ is defined on $M$ and has no other
  blow-up than near $\del^-M$.
\end{theorem}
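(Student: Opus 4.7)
The plan is to apply the maximum principle to the approximate solutions $f_\tau$ constructed in the proof of Theorem~\ref{thm:blowup}, using the extra hypothesis $\tr K\leq 0$ to force $f_\tau\geq 0$ uniformly in $\tau$. Such a uniform lower bound prevents any subsequential blow-down $f_{\tau_i}\to-\infty$ inside $M$, so $\Omega_-=\emptyset$. Combined with the identification $\overline{\Omega_+}=\tilde M\setminus M$ obtained in the proof of Theorem~\ref{thm:blowup}, this yields $\Omega_0=M$.

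The core computation is at an interior minimum $p$ of $f_\tau$. There $\nabla f_\tau(p)=0$ and $\Delta f_\tau(p)\geq 0$, so evaluating $\CH$ and $\CP=\tr_N\bar K$ at the critical point $p$ reduces them to
\begin{equation*}
\CH[f_\tau](p)=\Delta f_\tau(p)\geq 0, \qquad \CP[f_\tau](p)=\tr K(p)\leq 0.
\end{equation*}
Therefore $\tau f_\tau(p)=\CJ[f_\tau](p)=\CH[f_\tau](p)-\CP[f_\tau](p)\geq 0$, which gives $f_\tau(p)\geq 0$. Since the Dirichlet data $f_\tau|_{\del^+\tilde M}=0$ and $f_\tau|_{\del^-\tilde M}=\delta/(2\tau)>0$ are also nonnegative, the overall minimum of $f_\tau$ is nonnegative, hence $f_\tau\geq 0$ wherever the sign hypothesis on the trace of $K$ can be invoked.

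The main obstacle is that $\tr K\leq 0$ is assumed only on $M$, while $f_\tau$ is defined on the larger manifold $\tilde M$, so a priori the minimum could sit at an interior point of $\tilde M\setminus M$ where the sign of $\tr\tilde K$ is not directly controlled. I would deal with this by revisiting the construction in the proof of Theorem~\ref{thm:blowup} and arranging that $\tr\tilde K\leq 0$ on all of $\tilde M$: the modification~(\ref{eq:2}) changes the trace by $-\psi(s)\leq 0$ and is therefore compatible with the sign; the initial extension of $K$ past $\del^-M$ to $K'$ is at our disposal and can be chosen with $\tr K'\leq 0$; and the data adjustment in section~3.2 of~\cite{Andersson-Metzger:2007} can be performed preserving the same sign condition. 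With this in hand, the argument above gives $f_\tau\geq 0$ on $\tilde M$ uniformly in $\tau$, and passing to the subsequential limit $\tau_i\to 0$ delivers $\Omega_0=M$.
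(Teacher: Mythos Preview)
Your proposal is correct and follows the same strategy as the paper: apply the maximum principle to the regularized solutions $f_\tau$, using that at an interior critical point $\CH[f_\tau]\geq 0$ and $\CP[f_\tau]=\tr K\leq 0$, hence $\tau f_\tau=\CJ[f_\tau]\geq 0$. The core computation is identical.

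The one substantive difference is your handling of the collar $\tilde M\setminus M$, which you correctly flag as the obstacle. You propose to propagate the sign condition $\tr\tilde K\leq 0$ through the extension and through both modification steps, so the maximum principle applies on all of $\tilde M$; this is plausible, but the assertion that the data adjustment of section~3.2 in~\cite{Andersson-Metzger:2007} can be done while preserving $\tr\tilde K\leq 0$ would need to be checked against that construction. The paper avoids this bookkeeping entirely: it only rules out a negative minimum of $f_\tau$ in the \emph{unmodified} region $M$, and then closes by invoking the fact, already established in the proof of Theorem~\ref{thm:blowup}, that the modified region lies in $\Omega_+$, so $f_\tau$ blows up to $+\infty$ there as $\tau\to 0$ and no negative minimum can occur outside $M$. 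This buys the paper a shorter argument that does not require controlling $\tr\tilde K$ in the collar at all.
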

\begin{proof}
  This follows from a simple argument using the maximum principle. Let
  $f_\tau$ be a solution to the regularized problem
  \begin{equation}
    \label{eq:5}
    \CH[f_\tau] - \CP[f_\tau] - \tau f_\tau = 0
  \end{equation}
  in $\tilde M$, as in the proof of theorem~\ref{thm:blowup}. We claim
  that $f_\tau$ can not have a negative minimum in the region where
  the data is unmodified. Assume that $x\in M$ is such a
  minimum. There we have $\CH[f_\tau]\geq 0$, and since $\graph f$ is
  horizontal at $x$ we have that
  \begin{equation*}
    \CP[f_\tau] = \tr K \leq 0.
  \end{equation*}
  thus the right hand side of \eqref{eq:5} is non-negative, whereas
  $\tau f_\tau$ is assumed to be negative, a contradiction.

  Since we know that in the limit $\tau\to 0$, the functions $f_\tau$
  must blow-up in the modified region which lies in $\Omega_+$, we
  infer a lower bound for $f_\tau$ from the above argument. Thus
  $\Omega_0 = M$ as claimed.
\end{proof}
  
%%% Local Variables: 
%%% mode: latex
%%% TeX-master: "master"
%%% ispell-local-dictionary: "en_US"
%%% End: 

%
\section{Asymptotic behavior}
\label{sec:asymptotic-behavior}
Here, we discuss a refinement of \cite[Corollary 2]{Schoen-Yau:1981},
which says that $N = \graph f$ converges uniformly in $C^2$ to the
cylinder $\del^- M \times \IR$ for large values of $f$. A barrier
construction allows us to determine the asymptotics of this
convergence. Before we present our result, recall the statement of
\cite[Corollary 2]{Schoen-Yau:1981}:
\begin{theorem}
  \label{thm:sideways}
  Let $N = \graph f$ the manifold constructed in the proof of
  theorem~\ref{thm:blowup} and let $\Sigma$ be a connected component
  of $\del^-M$. Let $U$ be a neighborhood of $\Sigma$ with positive
  distance to $\del^-M \setminus\Sigma$.

  Then for all $\eps>0$ there exists $\bar z = \bar z(\eps)$,
  depending also on the geometry of $(M,g,K)$, such that $N \cap U
  \times [\bar z,\infty)$ can be written as the graph of a function
  $u$ over $C_{\bar z}:= \Sigma \times [\bar z,\infty)$, so that
  \begin{equation*}
    |u(p,z)| + |\Nabla{C_{\bar z}} u(p,z) | + |\Nabla{C_{\bar z}}^2 u(p,z)| < \eps.
  \end{equation*}
  for all $(p,z)\in C_{\bar z}$. Here, $\Nabla{C_{\bar z}}$ denotes
  covariant differentiation along $C_{\bar z}$.
\end{theorem}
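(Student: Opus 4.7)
The plan is to combine the Schoen--Yau a priori curvature estimate for Jang graphs with a compactness-and-uniqueness argument: vertical translates of $N$ at increasing heights converge subsequentially to a cylinder whose base is a MOTS in $U$, and the outermost property of $\Sigma$ forces this base to coincide with $\Sigma$ itself.

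First, I would invoke the Schoen--Yau second fundamental form estimate: for a solution $f$ of $\CJ[f]=0$, the norm $|A^N|$ of the second fundamental form of $N=\graph f$ in $M\times\IR$ admits a bound depending only on the $C^2$-geometry of $(M,g,K)$ and on positive distance to $\del^-M\setminus\Sigma$ and to $\del^+M$. Applied over a slightly shrunken neighborhood $U'\Subset U$ of $\Sigma$, this produces a uniform bound $|A^N|\le C$ on $N\cap(U'\times\IR)$, independent of height.

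Next, I would exploit the vertical translation invariance of Jang's equation. Given any sequence $z_i\to\infty$, consider the vertical translates of $N$ by $-z_i$ intersected with $U'\times[-R,R]$. By the curvature bound and standard compactness (Arzel\`a--Ascoli plus elliptic regularity for $\CJ=0$), a subsequence converges in $C^2_{\mathrm{loc}}$ to a limit $N_\infty$ that still satisfies Jang's equation. Because $f\to+\infty$ at $\Sigma$ and the downward unit normal of $N$ has vertical component $(1+|\nabla f|^2)^{-1/2}\to 0$ uniformly near $\Sigma$, the limit $N_\infty$ is vertical, i.e.\ a cylinder $\Sigma_\infty\times[-R,R]$ over some surface $\Sigma_\infty\subset \overline{U'}$. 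On a vertical cylinder, $\CJ=0$ degenerates to $\theta^+[\Sigma_\infty]=H[\Sigma_\infty]+P[\Sigma_\infty]=0$, so $\Sigma_\infty$ is a MOTS in $U$. Since $\del^-M$ is outermost in $M$ and $\Sigma_\infty$ is separated from the other components of $\del^-M$ by the choice of $U$, we must have $\Sigma_\infty=\Sigma$. Independence of the subsequence upgrades this to convergence of the full family of translates, and the tubular neighborhood theorem then represents $N\cap(U\times[\bar z,\infty))$ as the normal graph of a function $u$ over $C_{\bar z}$ with $|u|+|\Nabla{C_{\bar z}}u|+|\Nabla{C_{\bar z}}^2 u|<\eps$ once $\bar z=\bar z(\eps)$ is chosen large enough.

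The main obstacle is the Schoen--Yau curvature estimate itself, which is derived via a Simons-type identity adapted to the geometry of $M\times\IR$ and the specific structure of Jang's equation. Once this is in hand, the compactness step is routine, and the identification $\Sigma_\infty=\Sigma$ is a soft consequence of the outermost hypothesis on $\del^-M$ together with the observation that a limit of vertical translates of a Jang graph is a translation-invariant solution. These later steps therefore present no serious difficulty.
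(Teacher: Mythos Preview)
The paper does not give its own proof of this statement: it is quoted verbatim as \cite[Corollary~2]{Schoen-Yau:1981} and then used as input for the sharper Theorems~\ref{thm:exponential} and~\ref{thm:rigidity}. Your outline is precisely the Schoen--Yau argument that the paper is citing --- uniform second fundamental form bound, compactness of vertical translates, and identification of the limiting cylinder --- so there is nothing to compare.

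One minor remark on your identification step: invoking the outermost property of $\del^-M$ to force $\Sigma_\infty=\Sigma$ is more than you need and slightly delicate, since $\Sigma_\infty$ is a priori only homologous to the single component $\Sigma$, not to all of $\del^+M$. The cleaner route is to observe that the superlevel sets $\{f>z\}\cap U$ are nested and shrink to $\Sigma$ as $z\to\infty$ (because $f$ is finite on $U\setminus\Sigma$ and tends to $+\infty$ exactly at $\Sigma$), so any Hausdorff limit of the level sets $\{f=z\}$ is already contained in $\Sigma$; combined with the $C^2$ convergence of the translated graphs this forces $\Sigma_\infty=\Sigma$ without appealing to outermostness.
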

If $\Sigma$ is strictly stable, we can in fact say more about $u$.
\begin{theorem}
  \label{thm:exponential}
  Assume the situation of theorem~\ref{thm:sideways}. If in addition $\Sigma$ is
  strictly stable with principal eigenvalue $\lambda >0$, we have that
  for all $\delta < \sqrt{\lambda}$ there exists $c=c(\delta)$ depending only
  on the data $(M,g,K)$ and $\delta$ such that
  \begin{equation*}
    |u(p,z)| + |\Nabla{C_{\bar z}} u(p,z)| + |\Nabla{C_{\bar z}}^2 u(p,z)| \leq c\exp(-\delta z).
  \end{equation*}
\end{theorem}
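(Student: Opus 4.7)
My plan is to implement the strategy outlined in the introduction: construct a supersolution to Jang's equation on a one-sided collar of $\Sigma$ which blows up logarithmically at $\Sigma$ at the desired rate $\delta<\sqrt{\lambda}$, use the comparison principle to dominate the blow-up solution $f$ by this supersolution, and then convert the resulting pointwise bound into the exponential sideways bound via Theorem~\ref{thm:sideways}. The gradient and Hessian estimates will be extracted from interior Schauder regularity at unit scale.

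Let $\phi>0$ be the principal eigenfunction of the stability operator $L_M$ on $\Sigma$, so $L_M\phi=\lambda\phi$. Fix Fermi coordinates $(p,d)$ in a one-sided tubular neighborhood $U$ of $\Sigma$ in $M$, with $d\in[0,d_0)$ the signed distance from $\Sigma$, and extend $\phi$ off $\Sigma$ by parallel transport along the $d$-direction. For $\delta<\sqrt{\lambda}$ I would consider the ansatz
\begin{equation*}
  \bar f(p,d) \;=\; -\tfrac{1}{\delta}\log\bigl(d/\phi(p)\bigr),
\end{equation*}
whose level set $\{\bar f=z\}$ is, by construction, the graph of $u_{\bar f}(p,z)=\phi(p)e^{-\delta z}$ over $\Sigma\times\{z\}$---exactly the target exponential profile. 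The heart of the argument is a direct asymptotic expansion of $\CJ[\bar f]$ as $d\to 0^+$. Since $|\nabla\bar f|\sim(\delta d)^{-1}$ the graph is asymptotically vertical, so both $\CH[\bar f]$ and $\CP[\bar f]$ reduce at leading order to the cylinder quantities over $\Sigma_d$, combining to $-\theta^+[\Sigma_d]=-d\,L_M(1)+O(d^2)$. The horizontal component of $\nabla\bar f$ coming from $\log\phi$, together with the next-order correction coming from the factor $(1+|\nabla\bar f|^2)^{-1/2}$ in $\CH$, should promote $L_M(1)$ to $L_M\phi/\phi=\lambda$, while introducing a competing $-\delta^2$ contribution from the non-verticality of the graph. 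The net leading behavior is then
\begin{equation*}
  \CJ[\bar f](p,d) \;=\; -(\lambda-\delta^2)\,d + o(d),
\end{equation*}
so $\bar f$ is a strict supersolution $\CJ[\bar f]\le 0$ on a sufficiently thin collar. After adding a large constant so that $\bar f\ge f$ on $\{d=d_0\}$, and using that both $f$ and $\bar f$ tend to $+\infty$ as $d\to 0^+$ (so $f-\bar f$ attains its supremum on $\{d=d_0\}$), the comparison principle for the quasilinear operator $\CJ$ yields $f\le\bar f$ throughout $U$.

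Translating into the sideways picture, $f\le\bar f$ immediately gives $u(p,z)\le u_{\bar f}(p,z)=\phi(p)e^{-\delta z}\le c\,e^{-\delta z}$ for large $z$. The opposite inequality $u(p,z)\ge -c\,e^{-\delta z}$ is obtained by the same construction applied to a subsolution built from $-\bar f+\const$, using the a priori smallness of $u$ supplied by Theorem~\ref{thm:sideways}. For the derivative estimates, $u$ satisfies a uniformly elliptic quasilinear PDE on $C_{\bar z}$ derived from Jang's equation via the sideways parametrization; by Theorem~\ref{thm:sideways} its coefficients are uniformly close to those of its linearization at $u=0$. Standard interior Schauder estimates on unit balls $B_1(p,z)\subset C_{\bar z}$ therefore give
\begin{equation*}
  \|u\|_{C^{2,\alpha}(B_{1/2}(p,z))} \;\le\; C\,\|u\|_{C^0(B_1(p,z))} \;\le\; C'\,e^{-\delta z},
\end{equation*}
which yields the claimed bounds on $\Nabla{C_{\bar z}}u$ and $\Nabla{C_{\bar z}}^2 u$.

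The main obstacle is the asymptotic identity $\CJ[\bar f]=-(\lambda-\delta^2)\,d+o(d)$: one must carefully expand $\CH[\bar f]$ and $\CP[\bar f]$ in Fermi coordinates, accounting for the non-Euclidean cross-sectional metrics $g_d$, the drift term $2S(\nabla\cdot)$ and the zeroth-order piece $\div S-|\chi^+|^2-|S|^2+\tfrac12\ScalSig-\mu-J(\nu)$ appearing in $L_M$, and verify that precisely the combination $\lambda-\delta^2$ survives at first order in $d$. This is morally the identification of $L_M$ as the normal Jacobi operator of the cylinder $\Sigma\times\IR$ for Jang's functional; with that identification in hand, the remainder of the argument consists of standard maximum-principle and Schauder estimates.
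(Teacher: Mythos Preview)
Your overall strategy---logarithmic supersolution, comparison, then Schauder on unit balls in the cylinder---is exactly the paper's. Two points of comparison and one genuine gap:

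\textbf{Implementation of the barrier.} The paper builds the foliation with \emph{lapse equal to the principal eigenfunction} $\beta$, via $\Psi(p,s)=\exp_p(s\beta\nu)$, and then takes the barrier $w=\phi(s)=a\log s$ as a function of the leaf parameter only. This is equivalent to your ansatz at leading order (since $d\approx\beta s$, so $a\log s\approx a\log(d/\beta)$ is your $\bar f$), but it is computationally much cleaner: $\CJ[w]$ reduces to the one--line estimate
\[
\CJ[w]\le -\lambda(1-\eps)\beta s+\tfrac{c_1}{|\phi'|^2}+\beta\tfrac{|\phi''|}{|\phi'|^3},
\]
and the eigenfunction enters only through the prefabricated bound $\theta^+[\Sigma_s]\ge\lambda(1-\eps)\beta s$. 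Your route requires expanding $\CH[\bar f]$ and $\CP[\bar f]$ in Fermi coordinates and reassembling $L_M$ term by term, which is the obstacle you yourself flag; the paper's choice of gauge sidesteps it entirely.

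\textbf{The gap.} Your comparison argument assumes that because $f$ and $\bar f$ both tend to $+\infty$ at $\Sigma$, the function $f-\bar f$ attains its supremum on $\{d=d_0\}$. This is not justified: a priori you have no control on the blowup rate of $f$, and if $f$ blew up faster than logarithmically you would have $f-\bar f\to+\infty$ at $\Sigma$. The paper handles this by applying the comparison not to $f$ but to the \emph{bounded} regularized solutions $f_\tau$ of $\CJ[f_\tau]=\tau f_\tau$ constructed in the proof of Theorem~\ref{thm:blowup}; these are uniformly bounded on $\Sigma_{\bar s}$, so after a single vertical translation one has $f_\tau\le\bar w$ for all $\tau$, and the bound passes to the limit.

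\textbf{Minor.} Your proposed subsolution for the lower bound $u\ge -ce^{-\delta z}$ is unnecessary (and the suggested $-\bar f+\const$ would not serve as one anyway): by construction the graph lies in $M$, on the outer side of $\Sigma$, so $u>0$.
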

\begin{proof}
  Denote by $\beta>0$ the eigenfunction to the principal eigenvalue
  $\lambda$ on $\Sigma$ normalized such that $\max_\Sigma\beta =
  1$. We denote by $\nu$ the normal vector field of $\Sigma$ pointing
  into $M$. Consider the map
  \begin{equation}
    \label{eq:7}
    \Psi : \Sigma\times [0,\bar s] \to M : (p,s) \mapsto \exp^M_p(s\beta\nu).
  \end{equation}
  Given $\eps>0$ we can choose $\bar s>0$ small enough such that the
  surfaces $\Sigma_s=\Psi(\Sigma,s)$ with $s\in[0,\bar s]$ form a
  local foliation near $\Sigma$ with lapse $\beta$ such that
  \begin{equation*}
    \theta^+[\Sigma_s] \geq \lambda(1-\eps)\beta s.
  \end{equation*}
  Denote the region swiped out by these $\Sigma_s$ by $U_{\bar s}$. Note that
  $\del U_{\bar s} = \Sigma \cup \Sigma_{\bar s}$ and $\dist(\Sigma_{\bar
    s},\Sigma)>0$. We can assume that $\dist(\Sigma_{\bar s},\del
  M)>0$. On $U_{\bar s}$ we consider functions $w$ of the form $w=\phi(s)$. For
  such functions Jang's operator can be computed as follows
  \begin{equation*}
    \CJ[w]
    =
    \frac{\phi'}{\beta\sigma}\theta^+
    - \left( 1+ \frac{\phi'}{\beta\sigma}\right) P
    - \sigma^{-2} K(\nu,\nu)
    + \frac{\phi''}{\beta^2\sigma^3},
  \end{equation*}
  where $\sigma^2 = 1 + \beta^{-2}(\phi')^2$, and $\phi'$ denotes the
  derivative of $\phi$ with respect to $s$,
  cf.~\cite{Andersson-Metzger:2007}. The quantities $\theta^+$,
  $K(\nu,\nu)$ and $P$ are computed on the respective $\Sigma_s$.

  Note that with our normalization
  \begin{equation*}
    \sigma^{-2} \leq \beta^{2}|\phi'|^{-2} \leq |\phi'|^{-2}
  \end{equation*}
  and if we assume that $\phi'\geq \mu$ for a large $\mu=\mu(\beta)$
  we have
  \begin{equation*}
    \left| 1 + \frac{\phi'}{\beta\sigma} \right|
    \leq
    2 |\phi'|^{-2}.
  \end{equation*}
  Furthermore,
  \begin{equation*}
    \left|\frac{\phi''}{\beta^2\sigma^3}\right|
    =
    \frac{|\phi''|}{\beta^2 (1+\beta^{-2}\phi'^2)^{3/2}}
    \leq
    \frac{|\phi''|}{\beta^2 (\beta^{-2}\phi'^2)^{3/2}}
    =
    \beta\frac{|\phi''|}{|\phi'|^3}.
  \end{equation*}
  On the other hand, increasing $\mu = \mu(\beta,\eps)$ if necessary,
  we have
  \begin{equation*}
    \frac{|\phi'|}{\beta\sigma}
    \geq
    1 - \eps,
  \end{equation*}
  if $|\phi'|\geq \mu$.

  In combination we find that
  \begin{equation}
    \label{eq:6}
    \CJ[w]
    \leq
    -\lambda(1-\eps)\beta s + \frac{c_1}{|\phi'|^2} + \beta\frac{|\phi''|}{|\phi'|^3},
  \end{equation}
  with $c>0$ depending on $\eps$ and the data $(M,g,K)$, provided
  $|\phi'|\geq \mu$ and $\phi'<0$.

  Choosing $\phi(s) = a \log s$ with $a = (1-\eps)^{-1}
  \lambda^{-1/2}$, we calculate that
  \begin{align*}
    \phi'(s) = \frac{a}{s},
    &\qquad
    \phi''(s) = -\frac{a}{s^2},
    \intertext{so that}
    \frac{1}{|\phi'|^2} = \frac{s^2}{a^2} = (1-\eps)^2\lambda s^2,
    &\qquad
    \frac{\phi''}{|\phi'|^3} = \frac{s}{a^2} = (1-\eps)^2\lambda s.
  \end{align*}
  Thus we can choose $\bar s$ so small that $|\phi'|\geq
  \mu(\beta,\eps)$ and the estimate in~\eqref{eq:6} holds. We can then
  decrease $\bar s$ further, so that $\bar s \leq
  \eps\beta/(c_1(1-\eps))$. This choice makes the right hand side
  of~\eqref{eq:6} non-positive, that is $\CJ[w] \leq 0$.
  Hence, we obtain a super-solution $w$ with
  \begin{equation*}
    \CJ_\tau w \leq 0
  \end{equation*}
  at least where $w \geq 0$, that is near $\Sigma$.

  As $w$ blows up near the horizon, and the $f_\tau$ are bounded
  uniformly in $\tau$ on $\Sigma_{\bar s}$, we can translate $w$ vertically
  to $\bar w = w+b$ with a suitable $b>0$ so that
  \begin{equation*}
    f_\tau|_{\Sigma_{\bar s}} \leq \bar w |_{\Sigma_{\bar s}} 
  \end{equation*}
  for all $\tau>0$. Then the maximum principle implies that
  $f_\tau\leq \bar w$ for all $\tau>0$ in $U_{\bar s}$ and consequently the function
  $f$ constructed in theorem~\ref{thm:blowup} also satisfies $f\leq
  \bar w$.

  Near $\Sigma$, the graph of $\bar w$ can be written as the graph of
  a function $\bar v$ over $\Sigma\times (\bar z,\infty)$ where $v$
  decays exponentially in $z$. This is due to the fact that by the
  assumptions on $\beta$, the parameter $s$ is comparable to the
  distance to $\Sigma$. By the above construction $u\leq v$, where $u$
  is the function from theorem~\ref{thm:sideways}. Thus we find the
  claimed estimate for $u$.

  Getting the desired estimates for the derivatives of $u$ is then a
  standard procedure, but as it is a little work to set the stage, we
  briefly indicate how to proceed.

  We choose coordinates of a neighborhood $\Sigma\times\IR$ in a
  slightly different manner as above. Let $\bar \Psi : \Sigma \times
  (-\eps,\eps)\to M$ be the map
  \begin{equation*}
    \bar\Psi
    :
    \Sigma\times (-\eps,\eps)\times \IR \to M \times\IR
    :
    (x,s,z) \mapsto \big( \exp_x(s\nu), z \big).    
  \end{equation*}
  For a function $h$ on $C_{\bar z}$ we let $\graph_{\bar\Psi} h$ be the set
  \begin{equation*}
    \graph_{\bar\Psi} h = \{ \bar\Psi(x, h(x),z) : (x,z) \in \Sigma\times \IR \}.
  \end{equation*}
  From theorem~\ref{thm:sideways}, it is clear that for large enough
  $\bar z$ the set $N\cap M\times [\bar z,\infty)$ can be written as
  $\graph_{\bar\Psi} h$, where $h$ decays exponentially by the above
  reasoning. We can compute the value of Jang's operator for $h$ as
  follows
  \begin{equation*}
    (\bar H - \bar P)[N] = \CJ h
  \end{equation*}
  where $\CJ$ is a quasi-linear elliptic operator of mean curvature
  type. To be more precise, $\CJ h$ has the form
  \begin{equation}
    \label{eq:8}
    \begin{split}
      \CJ h
      &=
      \del_z^2 h
      + \gamma_{h(x,z)}^{ij} \nabla^2_{i,j} h
      - 2 \gamma_{h(x,z)}^{ij} \del_i(h) K(\del_s, \del_j)
      - \theta^+[\Sigma_{h(x,z)}]\\
      &\qquad + Q(h,\Nabla{C_{\bar z}}h,
      \Nabla{C_{\bar z}}^2 h)
    \end{split}
  \end{equation}
  where $\gamma_s$ is the metric on $\Sigma_s$ and $Q$ is of the form
  \begin{equation*}
    Q(h,\Nabla{C_{\bar z}}h,\Nabla{C_{\bar z}}^2 h)
    =
    h * \Nabla{C_{\bar z}}
    + \Nabla{C_{\bar z}} * \Nabla{C_{\bar z}}
    + \Nabla{C_{\bar z}} * \Nabla{C_{\bar z}} * \Nabla{C_{\bar z}}^2 h
  \end{equation*}
  where $*$ denotes some contraction with a bounded
  tensor. Furthermore, the vectors $\del_i$, $i=1,2$ denote directions
  tangential to $\Sigma$ and $\del_z$ the direction along the
  $\IR$-factor in $C_{\bar z}$.

  By freezing coefficients, we therefore conclude
  that $h$ satisfies a linear, uniformly elliptic equation of the form
  \begin{equation*}
    a^{ij} \del_i\del_j h + \la b, \Nabla{C_{\bar z}} h \ra - \theta^+[\Sigma_{h(x,z)}] = 0.
  \end{equation*}
  By construction we have that $|\theta^+[\Sigma_s]| \leq \kappa s$
  for some fixed $\kappa$. Thus $\theta^+[\Sigma_{h(x,z)}]$ decays
  exponentially in $z$.

  Now we are in the position to use standard interior estimates for
  linear elliptic equations to conclude the decay of higher
  derivatives of $h$. This decay translates back into the decay of the
  first and second derivatives of $u$ as the coordinate transformation
  is smooth and controlled by the geometry of $(M,g,K)$.
\end{proof}
\begin{remark}
  If $\Sigma$ is not strictly stable, but has positive $k$-th
  variation, we find that the foliation near $\Sigma$ satisfies
  $\theta^+[\Sigma_s] \geq \kappa s^k$. Then a function of the form
  $\phi(s) = a s^{-p}$ with large $a$ and $p = \frac{k-1}{2}$ yields
  a super-solution. This super-solution can be used to prove that
  $|u| \leq C z^{2/(1-k)}$ as above.
\end{remark}
We can get even more information about the decay rate. A closer look
at equation~\eqref{eq:8} yields that the expression for $\CJ h$ on
$C_{\bar z}$ can also be written as follows
\begin{equation*}
  \CJ h = (\del_z^2 - L_M) h + Q'(h, \Nabla{C_{\bar z}} h, \Nabla{C_{\bar z}}^2 h),
\end{equation*}
since
\begin{align*}
  &\theta^+_s
  =
  s L_M 1 + \CO(s^2),
  \\
  &\gamma_{h(x,z)}^{ij} \nabla^2_{i,j} h
  =
  \Delta h + Q_1(h,\nabla h,\nabla^2 h),
  \intertext{and}
  &\gamma_{h(x,z)}^{ij} \del_i h K(\del_s, \del_j)
  =
  S(\nabla h) + Q_2(h,\nabla h),
\end{align*}
where the differential operators $\nabla$ and $\Delta$ are with
respect to $\Sigma$. Then note that
\begin{equation*}
  L_M h = h L_M 1 - \Delta h + 2 S(\nabla h).
\end{equation*}
Further investigation of the structure of $Q'$
yields that
\begin{equation*}
  |Q'(h, \Nabla{C_{\bar z}} h, \Nabla{C_{\bar z}}^2 h)|
  \leq
  C \big(|h|^2 + |\Nabla{C_{\bar z}} h|^2 + |h| |\Nabla{C_{\bar z}}^2
  h| + |\Nabla{C_{\bar z}} h|^2 |\Nabla{C_{\bar z}}^2
  h| \big),
\end{equation*}
so that in view of the differential Harnack estimate $|\Nabla{C_{\bar
    z}} h| \leq c |h|$ for positive solutions of linear elliptic equations we
have that in fact
\begin{equation*}
  |Q'(h,\Nabla{C_{\bar z}} h,\Nabla{C_{\bar z}}^2 h)|
  \leq
  c |h|\big(|h| + |\Nabla{C_{\bar z}} h| + |\Nabla{C_{\bar z}}^2 h|\big),
\end{equation*}
provided $|h|\leq C$.  By projecting the equation $\CJ h = 0$ to the
one-dimensional eigenspace of $L_M$ it is now a somewhat standard ODE
argument to show the following result.
\begin{theorem}
  \label{thm:rigidity}
  Under the assumptions of theorem~\ref{thm:exponential} there are no
  solutions $h : \Sigma\times [0,\infty) \to \IR$ to the equation
  \begin{equation}
    \label{eq:9}
    \CJ h = 0
  \end{equation}
  with decay 
  \begin{equation*}
    |h(p,z)| + |\Nabla{C_{\bar z}} h(p,z)| + |\Nabla{C_{\bar z}} h(p,z)|
    \leq
    C \exp(-\delta z)
  \end{equation*}
  such that $\delta > \sqrt{\lambda}$ and $h>0$.
\end{theorem}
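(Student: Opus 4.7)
The plan is to argue by contradiction: assuming a positive solution $h$ decaying at rate $\delta > \sqrt{\lambda}$, I project onto the principal eigenspace of $L_M$, derive a scalar ODE with quadratically small forcing, and bootstrap the decay rate until it clashes with positivity.

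Let $\beta^* > 0$ denote the positive principal eigenfunction of the formal adjoint $L_M^*$. By standard Krein--Rutman type arguments, $L_M^*$ shares the principal eigenvalue $\lambda$ of $L_M$, so $L_M^* \beta^* = \lambda \beta^*$. Define
\[
a(z) := \int_\Sigma h(\cdot,z)\, \beta^* \dmu.
\]
Positivity of $h$ and of $\beta^*$ gives $a(z) > 0$ for all $z$. Pairing the identity $(\del_z^2 - L_M) h + Q' = 0$ displayed just before the statement with $\beta^*$ and using the adjoint eigenvalue relation yields
\[
a''(z) - \lambda\, a(z) = g(z), \qquad g(z) := -\int_\Sigma Q'(h, \Nabla{C_{\bar z}} h, \Nabla{C_{\bar z}}^2 h)\, \beta^* \dmu.
\]
The quadratic estimate for $Q'$ displayed just above the statement gives $|g(z)| \leq c\, F(z)^2$, where $F(z)$ is any common bound for $|h| + |\Nabla{C_{\bar z}} h| + |\Nabla{C_{\bar z}}^2 h|$ on the slice at height $z$.

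With $F(z) = C e^{-\delta z}$ from the hypothesis, $|g(z)| \leq C_1 e^{-2\delta z}$; since $2\delta > \sqrt{\lambda}$, variation of parameters produces a particular solution $a_p$ with $|a_p(z)| \leq C_2 e^{-2\delta z}$ and $a_p, a_p' \to 0$ at infinity. The general solution reads $a(z) = A e^{\sqrt{\lambda} z} + B e^{-\sqrt{\lambda} z} + a_p(z)$; boundedness of $a$ forces $A = 0$, and $|a(z)| \leq C e^{-\delta z}$ combined with $\delta > \sqrt{\lambda}$ forces $B = 0$ as well, since otherwise $a$ would be asymptotic to $B e^{-\sqrt{\lambda} z}$, which decays strictly slower than the hypothesized rate. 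Therefore
\[
0 < a(z) \leq C_3\, e^{-2\delta z}.
\]
I now promote this improved decay of the projection back to a pointwise $C^2$ decay of $h$. On each slab $\Sigma \times [z-1, z+1]$ the nonnegative function $h$ satisfies a linear uniformly elliptic equation of second order after absorbing $Q'$ into the lower-order coefficients, which is legitimate because $F$ is already small. The elliptic Harnack inequality gives $\sup_{\Sigma} h(\cdot,z) \leq K\, a(z)$ with $K$ independent of $z$, and interior Schauder estimates upgrade this to $|h| + |\Nabla{C_{\bar z}}h| + |\Nabla{C_{\bar z}}^2 h| \leq K'\, e^{-2\delta z}$. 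Iterating the loop with $\delta$ replaced successively by $2^n \delta$ yields
\[
0 < a(z) \leq K_\star^n\, C_0\, e^{-2^n \delta z}
\]
for every $n \geq 0$, where $K_\star$ is a fixed geometric constant independent of $n$. Fixing $z > 0$ and sending $n \to \infty$ forces $a(z) = 0$, contradicting positivity.

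The main technical point is the promotion step: one must verify that the quadratic remainder $Q'$ can be legitimately absorbed into the coefficients of a linear elliptic operator whose Harnack and Schauder constants are uniform in $z$ and, crucially, do not worsen as the iteration proceeds. This uniformity rests on the smallness of the $C^2$ norm of $h$ near infinity, which genuinely makes $Q'$ a lower-order perturbation of the linearization $\del_z^2 - L_M$; once this is in place, the projection, variation-of-parameters analysis, and doubling iteration are routine.
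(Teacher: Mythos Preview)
Your projection onto the principal eigenspace and the resulting scalar ODE $a'' - \lambda a = g$ are exactly what the paper does, and your elimination of the homogeneous modes $Ae^{\sqrt\lambda z}$ and $Be^{-\sqrt\lambda z}$ is clean. The gap is in the bootstrap.

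When you promote the improved decay of $a$ back to a pointwise $C^2$ bound on $h$, you invoke interior Schauder on the unit slab $\Sigma\times[z-1,z+1]$. That estimate controls $|\Nabla{C_{\bar z}}^2 h(\cdot,z)|$ by $\sup_{[z-1,z+1]}|h|$, which in turn is bounded by $C_H\sup_{[z-1,z+1]}a$. At stage $n$ you know $a(z')\le A_n e^{-\delta_n z'}$ with $\delta_n=2^n\delta$, so the slab supremum costs a factor $e^{\delta_n}$: you get $F(z)\le C'A_n e^{\delta_n}e^{-\delta_n z}$. Feeding this into $|g|\le cF^2$ and then into the variation-of-parameters bound produces $A_{n+1}\lesssim A_n^2 e^{2\delta_n}/\delta_n^2$. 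Tracking this recursion, one finds $\log\!\big(A_n e^{-\delta_n z}\big)/2^n \sim n\delta \to +\infty$ for every fixed $z$, so the bound on $a(z)$ blows up rather than tending to zero; the claimed geometric growth $K_\star^n$ is not what the recursion actually gives. Shrinking the slab does not help, since the equation is not scale-invariant in the $z$-direction and the Schauder constant degenerates.

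The paper sidesteps this entirely by exploiting the \emph{factored} quadratic bound displayed just before the theorem,
\[
|Q'|\le c\,|h|\big(|h|+|\Nabla{C_{\bar z}}h|+|\Nabla{C_{\bar z}}^2 h|\big),
\]
and never leaves the ODE. Bounding the bracket once and for all by the hypothesis $Ce^{-\delta z}$ and integrating the remaining factor $|h|$ against $\phi^*$ gives $|g(z)|\le c\,e^{-\delta z}a(z)$, hence $a''\le(\lambda+\eps)a$ for $z$ large. A one-line comparison (set $w=a'+\sqrt{\lambda+\eps}\,a$; then $(e^{-\sqrt{\lambda+\eps}\,z}w)'\le 0$ and $w\to 0$ force $a'\ge -\sqrt{\lambda+\eps}\,a$) yields $a(z)\ge a(z_0)e^{-\sqrt{\lambda+\eps}(z-z_0)}$, contradicting the assumed decay at rate $\delta>\sqrt{\lambda+\eps}$. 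No promotion to $C^2$ and no iteration are needed. If you want to keep your architecture, replacing your estimate $|g|\le cF^2$ by $|g|\le cF_0(z)\,a(z)$ with the \emph{fixed} original $C^2$ bound $F_0$ already closes the argument in a single step.
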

\begin{proof}
  Assume that $h>0$ is such a solution. We derive a contradiction as
  follows. Let $\lambda$ be the principal eigenvalue and $\phi$ be the
  corresponding eigenfunction of $L_M$ as before. Let $L_M^*$ be the
  (formal) adjoint of $L_M$ on $L^2(\Sigma)$ and denote by $\phi^*>0$
  its principal eigenfunction, normalized such that $\int_\Sigma
  \phi\phi^* \dmu = 1$. Then the operator
  \begin{equation*}
    P u
    =
    \left(\int_\Sigma \phi^* u \dmu\right)\ \phi
  \end{equation*}
  is a projection onto the eigenspace spanned by $\phi$ and moreover
  commutes with $L_M$. We interpret $h(z)$ as a family of
  functions on $\Sigma$, that is $h(z)(p) = h(p,z)$ for
  $p\in\Sigma$. Choose $\alpha(z)$ such that
  \begin{equation*}
    P h(z) = \alpha(z) \phi.
  \end{equation*}
  and $\beta(z)$ accordingly,
  \begin{equation*}
    \beta(z)\phi
    =
    P \big(Q'(h(\cdot,z),\Nabla{C_{\bar z}} h(\cdot,z), \Nabla{C_{\bar
        z}}^2 h(\cdot,p) )\big).
  \end{equation*}
  Then equation~\eqref{eq:9} and the fact that $P$ commutes with $L_M$
  and $\del_z$ imply
  \begin{equation*}
    \alpha''(z) - \lambda \alpha(z) = \beta(z).
  \end{equation*}
  Using $\phi^* >0$ and $h>0$ yields $\alpha(z)>0$ and we can
  furthermore estimate that
  \begin{equation*}
    \begin{split}
      |\beta(z)|
      &
      \leq
      c \int_\Sigma \phi^* |h(p,z)| \big(|h(p,z)| + |\nabla h(p,z)| +
      |\nabla^2 h(p,z)|\big) \dmu
      \\
      &
      \leq
      c\exp(-\delta z) \int_\Sigma \phi^* |h(p,z)|\dmu
      \\
      &
      \leq
      c \exp(-\delta z) \alpha(z).
    \end{split}
  \end{equation*}
  Thus, we conclude that on $[\tilde z, 0)$ the function
  $\alpha>0$ satisfies a differential inequality of the form
  \begin{equation*}
    \alpha''(z) - \lambda \alpha  \leq \eps \alpha.
  \end{equation*}
  where $\eps>0$ can be chosen arbitrarily small by choosing $\tilde
  z$ large enough. If $\sqrt{\lambda + \eps} < \delta$ this ODE has no
  solutions with decay $\exp(-\delta z)$ other than the trivial
  solution. Thus $\alpha \equiv 0$ and we arrive at the desired
  contradiction.
\end{proof}

%%% Local Variables: 
%%% mode: latex
%%% TeX-master: "master"
%%% ispell-local-dictionary: "en_US"
%%% End: 

%
\section*{Acknowledgments}
The author thanks the Mittag-Leffler-Institute, Djursholm, Sweden for
hospitality and support during the program \emph{Geometry, Analysis,
  and General Relativity} in Fall 2008.

%%% Local Variables: 
%%% mode: latex
%%% TeX-master: "master"
%%% ispell-local-dictionary: "en_US"
%%% End: 

%
\bibliographystyle{amsalpha}
\bibliography{../extern/references}
\end{document}